\newcommand{\FIG}[1]{\begin{center}
  \mbox{\epsfclipoff\epsffile{#1.eps}}
  \end{center}}
\DeclareMathOperator*{\argmin}{arg\,min}
\newtheorem{theorem}{Theorem}
\newtheorem*{blemma}{Burnashev's Lemma}
\newtheorem{lemma}{Lemma}
\newtheorem{definition}{Definition}
\begin{document}
\title{Instantaneous SED coding over a DMC} 


\author{ Nian Guo and Victoria Kostina
\thanks{N. Guo and V. Kostina are with the Department
of Electrical Engineering, California Institute of Technology, Pasadena, CA, 91125 USA. E-mail: \{nguo,vkostina\}@caltech.edu. This work was supported in part by the National Science Foundation (NSF) under grants CCF-1751356 and CCF-1956386.}}


\maketitle


\begin{abstract}
In this paper, we propose a novel code for transmitting a sequence of $n$ message bits in real time over a discrete-memoryless channel (DMC) with noiseless feedback, where the message bits stream into the encoder one by one at random time instants. Similar to existing posterior matching schemes with block encoding, the encoder in our work takes advantage of the channel feedback to form channel inputs that contain the information the decoder does not yet have, and that are distributed close to the capacity-achieving input distribution, but dissimilar to the existing posterior matching schemes, the encoder performs \emph{instantaneous} encoding - it immediately weaves the new message bits into a continuing transmission. A posterior matching scheme by Naghshvar et al. partitions the source messages into groups so that the group posteriors have a \emph{small-enough difference} (SED) to the capacity-achieving distribution, and transmits the group index that contains the actual message. Our code adopts the SED rule to apply to the evolving message alphabet that contains all the possible variable-length strings that the source could have emitted up to that time. Our instantaneous SED code achieves better delay-reliability tradeoffs than existing feedback codes over $2$-input DMCs: we establish this dominance both by simulations and via an analysis comparing the performance of the instantaneous SED code to Burnashev's reliability function.

Due to the message alphabet that grows exponentially with time $t$, the complexity of the instantaneous SED code is double-exponential in $t$. To overcome this complexity barrier to practical implementation, we design a low-complexity code for binary symmetric channels that we name the instantaneous \emph{type set} SED code. It groups the message strings into sets we call \emph{type sets} and tracks their prior and posterior probabilities jointly, resulting in the reduction of complexity from double-exponential to $O(t^4)$. Simulation results show that the gap in performance between the instantaneous SED code and the instantaneous type-set SED code is negligible.

\end{abstract}

\section{Introduction}

With the emergence of the Internet of Things (IoT), communication systems, such as those employed in distributed control and tracking scenarios, are becoming increasingly dynamic, interactive, and delay-sensitive. In classical feedback codes with block encoding \cite{Horstein}--\cite{Antonini}, the encoder knows the entire message sequence ahead of the transmission. Since collecting the message bits into a block prior to the transmission induces delay, these codes are ill-suited to real-time applications where the data continuously streams into the encoder in real time. The encoder in our work performs \emph{instantaneous} encoding: it starts transmitting as soon as the first message bit arrives and incorporates new message bits into the continuing transmission on the fly. 

Although feedback does not increase the capacity of a memoryless channel \cite{Shannon}, it simplifies the design of capacity-achieving codes and improves
achievable delay-reliability tradeoffs \cite{PPV}, \cite{Burnashev}. Horstein~\cite{Horstein} proposed a scheme for transmitting an infinite bit sequence over a binary-symmetric channel (BSC) with feedback. Horstein's scheme was shown to be capacity-achieving in \cite{Sch1}--\cite{Sch2}, \cite{Shayevitz}.
Waeber et al.~\cite{Waeber}, who refer to Horstein's scheme as `probabilistic bisection algorithm' (PBA), showed that its expected $L^1$-loss between the message and the estimate converges at a geometric rate with time. Schalkwijk and Kailath~\cite{SK-I} designed a feedback code that achieves the capacity of an additive white Gaussian noise channel under an average power constraint. Several feedback codes with block encoding for DMCs \cite{Burnashev}--\cite{Naghshvar2} have been proposed to attain Burnashev's reliability function \cite{Burnashev}, which is the maximum rate of the exponential decay of the error probability when the rate is strictly below the channel capacity and the blocklength is taken to infinity. Naghshvar et al.~ \cite{Naghshvar}--\cite{Naghshvar2} designed a feedback code for a binary-input DMC with feedback that is similar in spirit to Horstein's scheme \cite{Horstein}. Provided that the capacity-achieving distribution is Bernoulli$\left(\frac{1}{2}\right)$, Naghshvar’s scheme provably attains Burnashev’s reliability function \cite{Burnashev}. Naghshvar et al.'s `small-enough difference' (SED) encoder~\cite{Naghshvar} partitions the message alphabet into groups so that the groups' posterior probabilities are as close as possible to the capacity-achieving input distribution, and transmits the index of the group that contains the message. While the SED rule in~ \cite{Naghshvar} has double-exponential complexity in the message length $n$, Antonini et al.~\cite{Antonini} proposed a scheme for the BSCs with complexity $O(n^2)$. The complexity reduction is realized by grouping the elements with the same posterior, i.e., with the same Hamming distance to the received sequence.  Shayevitz and Feder~\cite{Shayevitz} abstracted the similarities between Horstein's scheme \cite{Horstein} and Schalkwijk-Kailath's scheme \cite{SK-I} into a general method to design capacity-achieving feedback schemes they termed \emph{posterior matching}. The posterior matching method ensures that the encoder transmits the information about the message that has not yet been seen by the decoder, and matches the distribution at the output of the encoder to the capacity-achieving distribution. It applies to most memoryless channels.

While the schemes in \cite{Horstein}--\cite{Shayevitz} use block encoding, two existing works \cite{Lalitha}, \cite{Antonini2} considered instantaneous encoding. Lalitha et al. \cite{Lalitha} assumed that the inter-arrival times of bits are known by the decoder and that the channel is a BSC.\footnote{These assumptions are significantly more restrictive than our setting, as our decoder only knows the distribution of the inter-arrival times, and our channel is a DMC.} Their scheme operates as follows. An infinite binary message sequence is mapped to a real number on $[0,1]$. The message prior is uniform before the first transmission. The message posterior is updated both by the encoder and the decoder upon receiving each channel output. At the $n$-th bit arrival time, the encoder and the decoder divide each of the existing $2^{n-1}$ uniform intervals in $[0,1]$ into two equal intervals. The encoder finds the interval that contains the median of the message posterior, and transmits a bit describing whether the real-valued representation of $n$ received bits is above or below a randomly chosen boundary of this interval. Lalitha et al.~\cite{Lalitha} proved that their scheme achieves the probability of erroneously decoding the first $n$ bits at time $t$ that decays exponentially in $t$, and derived a lower bound on the maximum rate that leads to a vanishing error probability. Antonini et al. \cite{Antonini2} designed a causal encoding scheme for streaming bits with a fixed arrival rate over a BSC and showed by simulation that the code rate approaches the channel capacity as the bit arrival rate approaches the transmission rate.

In this paper, we propose a feedback code with instantaneous encoding we term the \emph{instantaneous SED code}. At each time $t$, the encoder and the decoder first calculate the priors of all possible message strings using the message bit arrival probability and the posteriors at time $t-1$. Then, they apply an adapted SED rule to partition the evolving message alphabet into groups. The adapted SED rule drives the group priors instead of group posteriors as close as possible to the capacity-achieving distribution. This is necessary because due to the possible arrival of a new message bit at time $t$, the posteriors at time $t-1$ are insufficient to describe all possible variable-length message strings at time $t$. In contrast, feedback codes with block encoding \cite{Horstein}--\cite{Antonini} only need to consider the posteriors, since the block encoding implies that the priors at time $t$ are equal to the posteriors at time $t-1$.
The encoder transmits the index of the group that contains the true variable-length string it received so far. The posterior of each variable-length message string is calculated using the prior of that message string and the received channel output.

The instantaneous SED code outperforms the existing feedback coding schemes over $2$-input DMCs. By analyzing the performance of the instantaneous SED code, we derive a lower (achievability) bound on the reliability function of a $2$-input DMC with feedback over the class of feedback codes with instantaneous encoding. Our bound outperforms the achievability bounds derived from Burnashev's \cite{Burnashev} and Naghshvar et al.'s \cite{Naghshvar} schemes with block encoding. Furthermore, simulations show that the instantaneous SED code achieves a significantly larger rate than Naghshvar et al.'s \cite{Naghshvar} and Lalitha et al.'s \cite{Lalitha} schemes.

Since the cardinality of the message alphabet increases exponentially with time $t$ as new message bits arrive, and since the SED rule has exponential complexity in the size of the message alphabet, the complexity of the instantaneous SED code is double-exponential in time. We design a polynomial-complexity code we term \emph{the instantaneous type-set SED code}, which applies to the BSCs. The complexity of the instantaneous type-set SED code is $O(t^4)$, while the rate gap to the instantaneous SED code is negligible. The complexity reduction is achieved due to the following effects. First, the instantaneous type-set SED code classifies the elements in the message alphabet into type sets whose cardinality is $O(t^2)$, which is much smaller than $O(2^t)$, the cardinality of the alphabet at time $t$. Thus, performing the prior and the posterior updates in terms of the type sets instead of the individual strings leads to an exponential reduction in complexity. 
Second, the instantaneous type-set SED code uses a rule we refer to as the type-set SED rule where the alphabet is partitioned by grouping the type sets rather than the individual strings. The type-set SED rule has quadratic complexity in the number of type sets, which is exponentially smaller than the complexity of the SED rule. Although Antonini et al.'s scheme \cite{Antonini} attains complexity reduction also by grouping binary strings, the type sets used in the instantaneous type-set SED code are different from the groups used in \cite{Antonini}. While the groups in Antonini et al.'s scheme \cite{Antonini} are generated all at once at time $t=n$ by grouping the message strings that have the same Hamming distance to the received channel outputs, the instantaneous type-set SED code generates new type sets at every time $t\leq n$.

The rest of the paper is organized as follows. In Section~\ref{SecII}, we formally define feedback codes with instantaneous encoding. In Section~\ref{SecIII}, we present the instantaneous SED code, analyze its performance to derive an achievability bound to the reliability function, and show that it outperforms existing feedback codes. In Section~\ref{LRTSC_sec}, we present the instantaneous type-set SED code.

\textit{Notation:} We denote by $\{0,1\}^t$ the set of all binary strings of length equal to $t$, and we denote by $\mathcal B_t$ the set of all binary strings of length less than or equal to $t$,
\begin{align}
    \mathcal B_t\triangleq \cup_{1\leq s\leq t}\{0,1\}^s.
\end{align}
We denote by $\boxplus$ the concatenation operation between two strings, e.g. $101\boxplus 1=1011$. We denote by $\boxminus$ the truncation operation that deletes the last bit of a string, e.g. $101\boxminus = 10$. We define the function $[\cdot]_n\colon \mathcal B_{\infty}\rightarrow \mathcal \{0,1\}^n$
that extracts the first $n$ bits of its argument, e.g. $[1011]_2=10$. For a possibly infinite sequence $x=\{x_1,x_2,\dots\}$, we write $x^i=\{x_1,x_2,\dots,x_i\}$ to denote the vector of its first $i$ elements.

\section{Problem Formulation}\label{SecII}
Consider the real-time communication system in Fig.~\ref{problem}.
\begin{figure}[h!]
\centering
\includegraphics[trim = 25mm 240mm 62mm 32mm, clip, width=8cm]{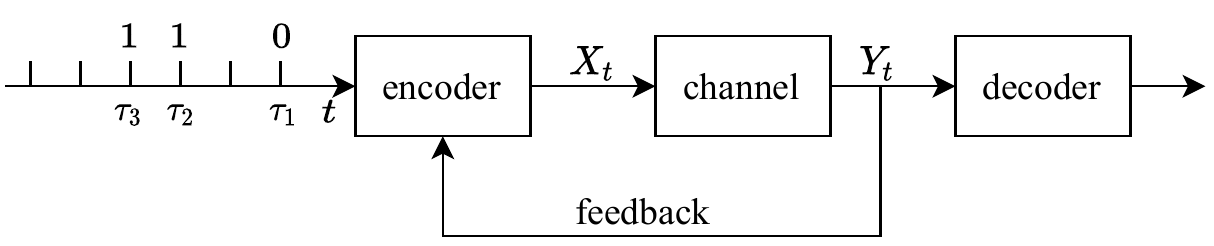}
\caption{Real-time communication over a channel with feedback}
\label{problem}
\end{figure}
A sequence of message bits streams into the encoder one by one at random time instants. At each time instant, the encoder may only receive a bit $0$ or a bit $1$ or no bits. We denote by $B_t^*$ the bit sequence that has arrived at the encoder by time $t$.  The length of $B_t^*$ is less than or equal to $t$: 
\begin{align}
    &B_t^*\in\mathcal B_t,~t=1,2,\dots
\end{align}
A new bit arrives at time $t+1$ according to the probability distribution
\begin{equation}\label{bit_prob}
P_{B_{t+1}^*|B_t^*}.
\end{equation}
Since at most one bit can arrive at any time, the conditional distribution $P_{B_{t+1}^*|B_t^*}(\cdot|s)$ can only place non-zero masses at $B_{t+1}^* = s$, $B_{t+1}^* = s\boxplus 0$ and $B_{t+1}^* = s\boxplus 1$.
Both the encoder and the decoder know \eqref{bit_prob}. We denote by $\tau_n$ the time at which the $n$-th message bit arrives at the encoder. $B_{\tau_n}^*$ is the length-$n$ message string that we aim to decode. Throughout, we assume that $\tau_1=1$.

The next definition formalizes causal feedback codes with instantaneous encoding.
\begin{definition}[An $(n,s,\epsilon)$ feedback code with instantaneous encoding]\label{def1}
Fix a bit arrival distribution \eqref{bit_prob} and a memoryless channel with a single-letter channel transition probability $P_{Y|X}\colon \mathcal X\rightarrow\mathcal Y$. An $(n,s,\epsilon)$ feedback code with instantaneous encoding consists of:\\
1. A sequence of encoding functions $\mathsf f_t\colon \mathcal B_t\times \mathcal Y^{t-1}\rightarrow \mathcal X$, $t=1,2,\dots$ that the encoder uses to form the channel inputs
\begin{align}\label{enc}
    X_t \triangleq \mathsf f_t(B_t^*, Y^{t-1});
\end{align}
2. A sequence of decoding functions $\mathsf g_{t}\colon \mathcal Y^t \rightarrow \{0,1\}^n$, $t=1,2,\dots$;\\
3. A stopping time $\lambda_n$ adapted to the filtration generated by $\{Y_t\}_{t=1,2,\dots}$ that determines when the transmission stops and that satisfies
\begin{align}\label{time_constraint}
    &\mathbb E [\lambda_n]\leq s,\\ \label{error_constraint}
    & \mathbb P[\hat B\neq B_{\tau_n}^*]\leq \epsilon,
\end{align}
where 
\begin{align}\label{dec}
    \hat B \triangleq \mathsf g_{\lambda_n}(Y^{\lambda_n}).
\end{align}

\end{definition}

For any $R>0$, the minimum error probability achievable by rate-$R$ feedback codes with instantaneous encoding and message length $n$ is given by
\begin{align}\nonumber
    \epsilon_n^*(R) \triangleq \min\{\epsilon\colon\exists~& \left(n,\frac{n}{R},\epsilon\right)~\text{feedback code}\\
    &\text{with instantaneous encoding}\}.
\end{align}
The reliability function $E(R)$ for feedback codes with instantaneous encoding is defined as
\begin{align}\label{reliabilityfunc}
    E(R) \triangleq \lim_{n\rightarrow\infty}\frac{R}{n}\log_2\frac{1}{\epsilon_n^*(R)}.
\end{align}

\section{Instantaneous SED Code}\label{SecIII}
In this section, we present the instantaneous SED code, which falls in the framework of Definition~\ref{def1}.
Then, we analyze the new code to provide an achievability bound on the reliability function \eqref{reliabilityfunc} of a $2$-input DMC with feedback over the class of codes in Definition~\ref{def1}. Finally, we numerically compare the instantaneous SED code with existing feedback coding schemes.
\subsection{The instantaneous SED code: algorithm}\label{RTC}
At time $t$, the encoder and the decoder first update the priors $P_{B_{t}^*|Y^{t-1}}$ of all elements in $\mathcal B_t$ using the posteriors $P_{B_{t-1}^*|Y^{t-1}}$ and the bit arrival distribution \eqref{bit_prob}. For any $y^{t-1}\in\mathcal Y^{t-1}$, the prior of the string $i\in \mathcal B_{t}$ at time $t$ is given by
\begin{align}\nonumber
    &P_{B_{t}^*|Y^{t-1}}(i|y^{t-1})\\\label{post_prior}
    =~&\sum_{j\in\mathcal B_{t-1}}P_{B_{t}^*|B_{t-1}^*}(i|j) P_{B_{t-1}^*|Y^{t-1}}(j|y^{t-1}),
\end{align}
where \eqref{post_prior} holds since $B_{t}^*-B_{t-1}^*-Y^{t-1}$ is a Markov chain.

Then, the encoder and the decoder partition the alphabet $\mathcal B_t$ into $|\mathcal X|$ disjoint groups $\{\mathcal G_x(t)\}_{x\in\mathcal X}$ using the prior $P_{B_t^*|Y^{t-1}}$, such that if 
\begin{align}\label{pXinequal}
  P_X^*(a)\geq   P_X^*(b),
\end{align}
then\footnote{If there exists $a,b\in \mathcal X$ such that \eqref{pXinequal} is satisfied with equality, then the encoder and the decoder can avoid ambiguity by agreeing on an order relation between $a$ and $b$ before the transmission begins.}
\begin{align}\label{Gxinequal}
   P_{B_t^*|Y^{t-1}}(\mathcal G_{a}(t)|y^{t-1})\geq  P_{B_t^*|Y^{t-1}}(\mathcal G_{b}(t)|y^{t-1}),
\end{align}
and for all other possible partitions $\{\mathcal G_{x}'(t)\}_{x\in\mathcal X}$
\begin{align}\nonumber
    &\sum_{x\in\mathcal X}\left|P_{B_t^*|Y^{t-1}}(\mathcal G_{x}(t)|y^{t-1}) - P^*_X(x)\right| \\\label{setparition}
    \leq& \sum_{x\in\mathcal X}\left| P_{B_t^*|Y^{t-1}}(\mathcal G_{x}'(t)|y^{t-1}) - P^*_X(x)\right|,
\end{align}
where $P_X^*$ is the capacity-achieving input distribution.

The output of the encoder is formed as follows. Once the message alphabet is partitioned, the encoder determines the group that contains the binary string $B_t^*$ it received so far and transmits the index of that group,
\begin{subequations}\label{channel_XT}
\begin{align}
    X_t &= \mathsf f_t(B_t^*,Y^{t-1})\\ \label{channel_XT_b}
    &\triangleq \sum_{x\in\mathcal X} x\mathbbm{1}\{B_t^*\in \mathcal G_x(t)\}.
\end{align}
\end{subequations}
The set partitioning rule in \eqref{pXinequal}--\eqref{setparition} together with the encoding function in \eqref{channel_XT} drives the distribution of $X_t$ as close as possible towards the capacity-achieving distribution.

 Upon receiving the channel output $Y_t$ at time $t$, the decoder updates the posteriors $P_{B_t^*|Y^t}$ using the priors $P_{B_t^*|Y^{t-1}}$ and the channel output $Y_t$. For any $y^t\in\mathcal Y^t$, the posterior of the string $i\in \mathcal B_t$ at time $t$ is given by
\begin{align}\label{prior_post}
    P_{B_t^*|Y^t}(i|y^t) =\frac{P_{Y|X}(y_t|\mathsf f_t(i,y^{t-1})) P_{B_t^*|Y^{t-1}}(i|y^{t-1})}{\sum_{x\in\mathcal X}P_{Y|X}(y_t|x)P_{B_t^*|Y^{t-1}}(\mathcal G_x(t)|y^{t-1})},
\end{align}
where \eqref{prior_post} holds since $Y_t-X_t-(B_t^*,Y^{t-1})$ is a Markov chain. Since the encoder knows $Y_t$ through the noiseless feedback, the posterior in \eqref{prior_post} is also available at the encoder.

 The decoding is performed at stopping time
\begin{align}\label{tau}
    \lambda_n \triangleq \min\left\{t\colon \max_{i\in\{0,1\}^n} P_{\left[B_t^*\right]_n|Y^t}(i|Y^t)\geq 1-\epsilon\right\}, \epsilon\in(0,1).
\end{align}
The decoder forms the estimate $\hat B$ at time $\lambda_n$ by choosing the binary string $i^*$ that achieves the maximum in the right side of \eqref{tau}. The stopping rule \eqref{tau} ensures that the error probability in \eqref{error_constraint} does not exceed $\epsilon$, see Appendix~\ref{appen_error} for details.

The set partitioning rule in \eqref{pXinequal}--\eqref{setparition} reduces to the SED rule \cite{Naghshvar} if the entire block of $n$ message bits is known at the encoder before the first transmission, i.e., 
\begin{align}
    B_t^* = B_{\infty}^*,~\forall t=1,2,\dots
\end{align}

\subsection{An achievability bound on the reliability function}\label{Ach}
We present a lower bound on the reliability function $E(R)$ in \eqref{reliabilityfunc} of a $2$-input DMC with feedback over the class of codes in Definition~\ref{def1}. 
We denote $\bar \tau_n \triangleq \mathbb E[\tau_n]$.
\begin{theorem}\label{prop1}
Fix a bit arrival distribution \eqref{bit_prob} that satisfies
\begin{itemize}
    \item[1)] $P_{B_{t+1}^*|B_t^*}(s|s) = 1,~\forall s\in\{0,1\}^n$;
    \item[2)] $\exists$ function $\mathsf d(\cdot)\colon\mathbb N\rightarrow \mathbb N$, such that  $d(n) = o(n)$ and
    \begin{align}\label{EN}
        \mathbb P[\tau_n\leq \bar \tau_n  + \mathsf d(n)] = 1
    \end{align}
\end{itemize}
and fix a $2$-input DMC with the single-letter transition probability $P_{Y|X}:\{0,1\}\rightarrow\mathcal Y$, $|\mathcal Y|<\infty$ such that
\begin{itemize}
    \item[3)] the channel capacity $C$ is achieved by
\begin{align}\label{cap_achi_2}
    P_{X}^*(0) = P_{X}^*(1) = 0.5;
\end{align}
\item[4)] the divergence
\begin{align}
    C_1\triangleq \max_{x_1,x_2\in\{0,1\}} D(P_{Y|X=x_1}|| P_{Y|X=x_2})
\end{align}
is maximized at $x_1 = 0, x_2 = 1$;
\item[5)] for any $y\in\mathcal Y$ and any $x\in\mathcal X$, the channel transition probability satisfies $P_{Y|X}(y|x)>0$.
\end{itemize} Then, the reliability function \eqref{reliabilityfunc} is lower bounded by
\begin{align}\label{lb_reliability}
    &E(R) \geq\\ \nonumber
    &C_1\left(1 - \lim_{n\rightarrow\infty}\left(\frac{H\left(B_{\bar \tau_n + \mathsf d(n)}^*|Y^{\bar \tau_n+\mathsf d(n)}\right)}{nC}+\frac{\bar \tau_n}{n}\right)R\right),
\end{align}
where $Y_1,Y_2,\dots$ are the channel outputs in response to the channel inputs generated by the encoder of the instantaneous SED code in Section~\ref{RTC}.
\end{theorem}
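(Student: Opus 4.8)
The plan is to reduce the analysis of the instantaneous SED code to a two-phase argument reminiscent of Burnashev's scheme and Naghshvar et al.'s analysis \cite{Burnashev}, \cite{Naghshvar}, tracking the log-likelihood-type potential $U_t \triangleq \log_2 \frac{P_{[B_t^*]_n|Y^t}(\hat B_t|Y^t)}{1-P_{[B_t^*]_n|Y^t}(\hat B_t|Y^t)}$ (where $\hat B_t$ is the current MAP estimate of the first $n$ bits), or the closely related conditional entropy $H(B_t^*|Y^t)$. After the last bit arrives at time $\tau_n$ — which by assumption~2) is guaranteed to have happened by time $\bar\tau_n + \mathsf d(n)$ — the message alphabet stops growing (assumption~1)), so the instantaneous SED code becomes exactly the SED code of \cite{Naghshvar} operating on a fixed alphabet $\{0,1\}^n$ with prior $P_{B^*_{\bar\tau_n+\mathsf d(n)}|Y^{\bar\tau_n+\mathsf d(n)}}$. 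This is the key structural observation that lets me invoke the existing SED machinery.

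The main steps, in order: (i) \emph{Initialization / communication phase accounting.} By time $t_0 \triangleq \bar\tau_n + \mathsf d(n)$ the accumulated uncertainty about the first $n$ bits is $H(B^*_{t_0}|Y^{t_0})$; I need to argue this is the relevant ``starting entropy'' for the post-arrival phase, using the chain $B_t^* - B_{t-1}^* - Y^{t-1}$ and the data-processing/entropy-update recursions \eqref{post_prior}, \eqref{prior_post}. Because assumption~5) gives strictly positive transition probabilities and assumption~3) gives a symmetric capacity-achieving input, the SED partition \eqref{pXinequal}--\eqref{setparition} keeps the two group priors within $\pm\frac{1}{2}\max_i P(i)$ of $\frac12$, exactly as in \cite{Naghshvar}. (ii) \emph{Rate of entropy decrease in the SED phase.} Invoke the submartingale/drift bounds from \cite{Naghshvar}: while the MAP posterior is below $\tfrac12$, $H(B_t^*|Y^t)$ decreases on average by roughly $C$ per channel use; once the MAP posterior exceeds $\tfrac12$, the log-likelihood ratio $U_t$ increases on average by roughly $C_1$ per channel use, and the stopping rule \eqref{tau} halts when $U_t \geq \log_2\frac{1-\epsilon}{\epsilon}$. (iii) \emph{Expected stopping time.} Combining (i) and (ii), $\mathbb E[\lambda_n] \lesssim \bar\tau_n + \mathsf d(n) + \frac{H(B^*_{t_0}|Y^{t_0})}{C} + \frac{\log_2(1/\epsilon)}{C_1} + o(\cdot)$. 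Setting $\mathbb E[\lambda_n] = n/R$ and solving for $\log_2(1/\epsilon)$ yields $\log_2\frac1\epsilon \gtrsim C_1\left(\frac nR - \bar\tau_n - \mathsf d(n) - \frac{H(B^*_{t_0}|Y^{t_0})}{C}\right)$; dividing by $n/R$, using $\mathsf d(n) = o(n)$, and taking $n\to\infty$ gives \eqref{lb_reliability}. (iv) \emph{Error constraint.} The stopping rule \eqref{tau} directly guarantees $\mathbb P[\hat B \neq B^*_{\tau_n}] \leq \epsilon$ (referenced Appendix), so the code is a legitimate $(n, n/R, \epsilon)$ code.

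The hard part will be step (ii) in the \emph{non-stationary, growing-alphabet} regime before $\tau_n$: strictly speaking the drift lemmas of \cite{Naghshvar} are stated for a fixed alphabet, and during $t < \tau_n$ the SED rule in our code matches the \emph{priors} (after the \eqref{post_prior} mixing step) rather than the posteriors to $P_X^*$, so I must verify that the entropy $H(B_t^*|Y^t)$ still cannot increase ``too fast'' — intuitively each new bit adds at most one bit of entropy, and this is exactly what is being subtracted off as $H(B^*_{t_0}|Y^{t_0})/(nC)$ inside the limit, absorbing both the genuine residual uncertainty and the bit-arrival-induced uncertainty. A secondary subtlety is handling the boundary time steps where a bit arrives simultaneously with the MAP posterior being near $\tfrac12$, and confirming that the $o(n)$ slack from $\mathsf d(n)$ plus lower-order terms in the drift bounds wash out in the limit; I expect a uniform-integrability / bounded-increments argument (each $U_t$ increment is bounded because of assumption~5)) to make the optional-stopping steps rigorous.
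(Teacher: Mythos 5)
Your proposal is correct and follows essentially the same route as the paper: the key structural observation that by time $\bar\tau_n + \mathsf d(n)$ all bits have arrived almost surely and the instantaneous SED code reduces to Naghshvar et al.'s block SED code with initial belief $P_{B^*_{\bar\tau_n+\mathsf d(n)}|Y^{\bar\tau_n+\mathsf d(n)}}$, followed by their expected-stopping-time bound $\mathbb E[\lambda_n]\leq \bar\tau_n+\mathsf d(n)+H\bigl(B^*_{\bar\tau_n+\mathsf d(n)}|Y^{\bar\tau_n+\mathsf d(n)}\bigr)/C+\log_2(1/\epsilon)/C_1+K$, is exactly the paper's argument (which simply cites \cite[Eq.~(14)]{Naghshvar} rather than re-deriving the two-phase drift analysis). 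The ``hard part'' you flag --- controlling the entropy drift during the growing-alphabet phase $t<\tau_n$ --- is actually unnecessary, since the analysis only begins at $\bar\tau_n+\mathsf d(n)$ and the residual conditional entropy at that time enters the bound as a given quantity.
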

\begin{proof}
Appendix~\ref{Prof_thm1}.
\end{proof}

Assumption 1) posits that the entire message contains $n$ bits. Assumption 2) posits that the $n$-th bit arrival time $\tau_n$ is almost surely bounded by $\bar \tau_n + \mathsf d(n)$. Assumption 5) on the channel is implicitly required in \cite[Eq. (3.11)]{Burnashev} and \cite[Appendix B]{Naghshvar}.  We continue to assume 1)--5) in the discussion that follows.

Using Theorem~\ref{prop1}, we can quantify the advantage of the instantaneous SED code over the existing feedback codes with block encoding over a $2$-input DMC. To apply feedback codes with block encoding to the bit streaming setting, we let the encoder buffer the arriving bits until they form a block of $n$ bits and then transmit. The buffering induces delay $\tau_n$.
Using Burnashev's bound on $\mathbb E[\lambda_n-\tau_n]$  \cite[Theorems 1 and 2]{Burnashev}, where $\lambda_n$ is the decoding time, we immediately conclude that the buffer-then-transmit construction achieves the following lower bound
\begin{align}\label{NC_reliability}
    E(R) \geq C_1\left(1 - \left(\frac{1}{C}+\lim_{n\rightarrow\infty}\frac{\bar \tau_n}{n}\right)R\right).
\end{align}
Comparing the right sides of \eqref{lb_reliability} and \eqref{NC_reliability}, we notice that 1) the maximum achievable rates in \eqref{lb_reliability} and \eqref{NC_reliability} depend on both the channel transition probability and the bit arrival distribution; 2) since $B_{\bar \tau_n + \mathsf d(n)}^*$ and $Y^{\bar \tau_n + \mathsf d(n)}$ are dependent, $H(B_{\bar \tau_n + \mathsf d(n)}^*|Y^{\bar \tau_n + \mathsf d(n)})<n$, hence the lower bound in \eqref{lb_reliability} is at least as good as \eqref{NC_reliability}.  While it is not easy to obtain an analytical expression for $\lim_{n\rightarrow\infty}\frac{H(B_{\bar \tau_n + \mathsf d(n)}^*|Y^{\bar \tau_n + \mathsf d(n)})}{n}$ in general, one can calculate it numerically. For example, we calculate it over a BSC($0.02$) for periodic bit arrivals, i.e., $ P_{B_{t+1}^*|B_{t}^*}(s\boxplus b|s) = 0.5, b\in\{0,1\},t<n$, and we obtain
\begin{subequations}\label{HB}
\begin{align}\label{HB1}
    &\lim_{n\rightarrow\infty}\frac{H\left(B_{\bar \tau_n+ \mathsf d(n)}^*|Y^{\bar \tau_n + \mathsf d(n)}\right)}{n}\\ 
    = ~&\lim_{n\rightarrow\infty}\frac{H(B_{n}^*|Y^{n})}{n} = 0.145,
\end{align}
\end{subequations}
thus the gap between the right sides of \eqref{lb_reliability} and \eqref{NC_reliability} is significant.

To further quantify the delay-reliability tradeoffs attained by the instantaneous SED code, in Fig.~\ref{rate_HB_fig}~and Fig.~\ref{rate_fig2}, we plot the rate $R_n(\epsilon) = \frac{n}{\mathbb E[\lambda_n]}$ achieved by the instantaneous SED code in comparison with the existing schemes as a function of message length $n$, for a fixed error probability $\epsilon$. The instantaneous SED code achieves a significantly larger rate than either Naghshvar et al.' block encoding scheme \cite{Naghshvar} or Lalitha et al.'s instantaneous encoding scheme \cite{Lalitha}.
\begin{figure}[htbp]
\includegraphics[scale=0.4]{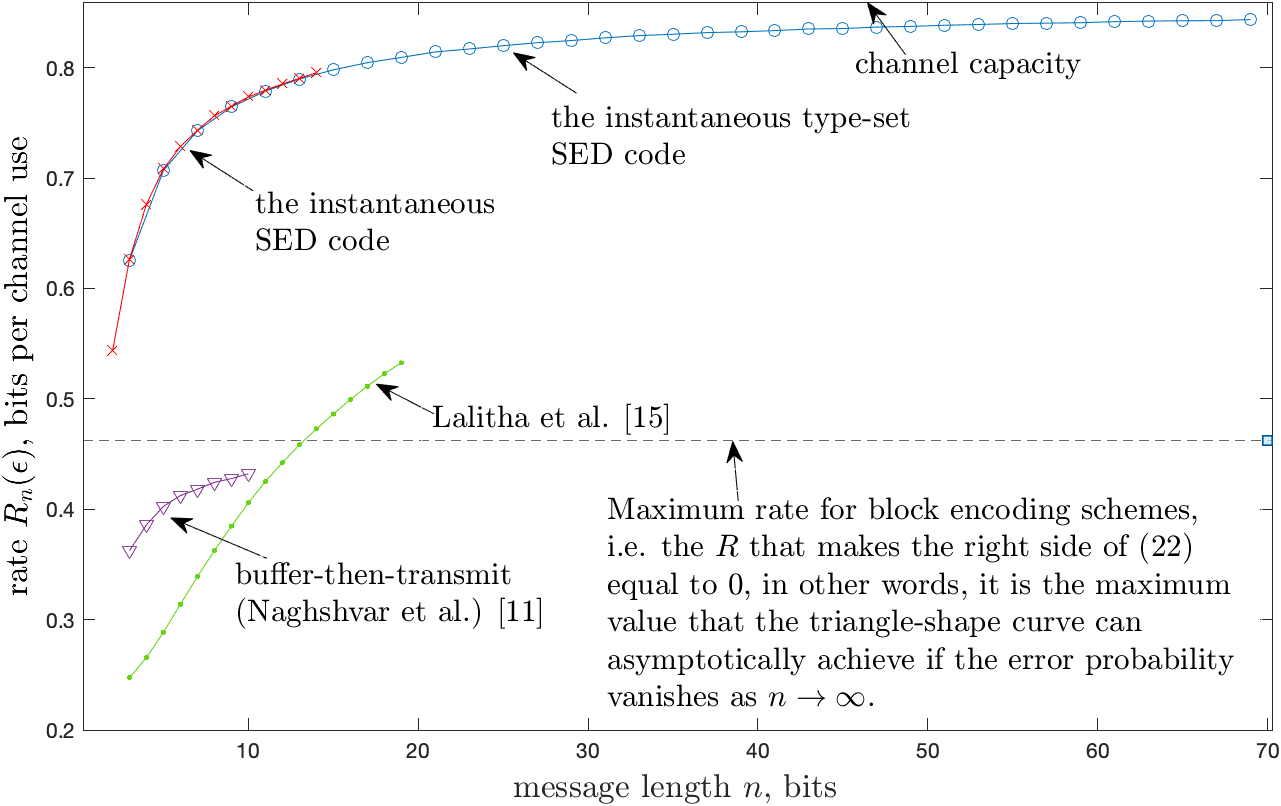}
\centering
\caption{Rate vs. message length over a BSC($0.02$). The error probability is constrained by $\epsilon = 10^{-3}$ \eqref{error_constraint}. At each time $t=1,2,\dots,n$, a new Bernoulli$\left(\frac{1}{2}\right)$ bit arrives at the encoder. We choose this bit arrival distribution to enable a comparison with Lalitha et al.'s code \cite{Lalitha}, which requires the knowledge of the bit arrival times at the decoder.
The curves are displayed for the range of $n$'s where the complexity of their calculation is not prohibitive.}
\label{rate_HB_fig}
\end{figure}
\begin{figure}
    \centering
    \includegraphics[scale=0.39]{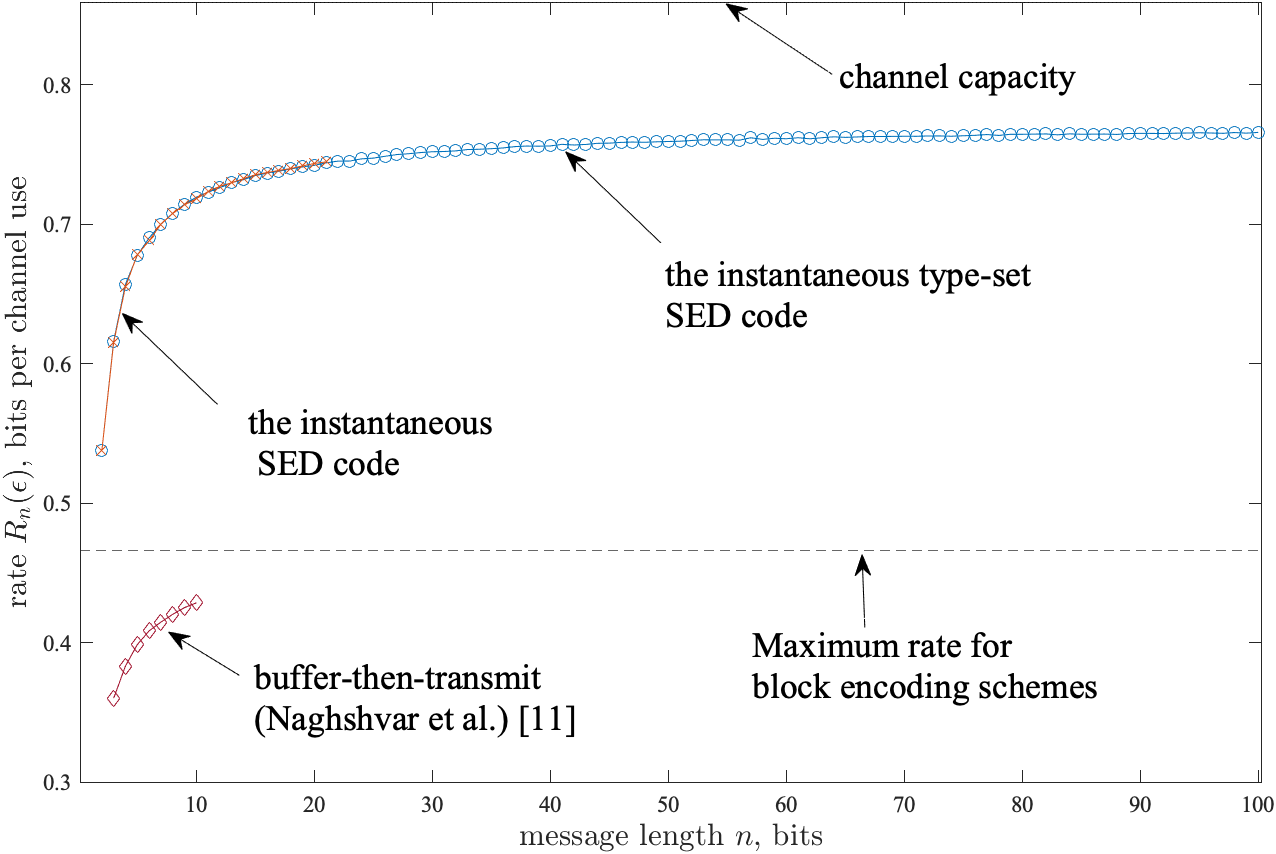}
    \caption{Rate vs. message length over a BSC($0.02$). The error probability is constrained by $\epsilon = 10^{-3}$ \eqref{error_constraint}. Message bits arrive at the encoder following a Bernoulli process in \eqref{equiprob} with $q=0.98$.
The curves are displayed for the range of $n$'s where their complexity is not prohibitive.}
    \label{rate_fig2}
\end{figure}
Note that the simulation results in Fig.~\ref{rate_fig2} show that the instantaneous (type-set) SED code continues to outperform Naghshvar et al.'s \cite{Naghshvar} scheme even if the message bits arrive at the encoder following a Bernoulli process, which does not satisfy assumption 2).
\section{Instantaneous type-set SED code}\label{LRTSC_sec}
In this section, we describe the instantaneous type-set SED code, whose complexity is polynomial in time. We assume that $n$ independent equiprobable message bits arrive at the encoder following a Bernoulli-$q$ process, i.e., $\forall b\in\{0,1\}$,
\begin{subequations}\label{equiprob}
\begin{align}
&P_{B_1^*}(0) = P_{B_1^*}(1) =0.5,\\
&P_{B_{t+1}^*|B_t^*} (s\boxplus b|s) =
    \begin{cases}
    \frac{q}{2}, &\text{if}~ s\in\mathcal B_{n-1},\\
    0, &\text{if}~ s\in\{0,1\}^n;
    \end{cases}
\end{align}
\end{subequations}
and assume that the channel is a BSC\footnote{The bit arrival distribution must satisfy that $P_{B_{t+1}^*|B_t^*} (s\boxplus 0|s) = P_{B_{t+1}^*|B_t^*} (s\boxplus 1|s)$, otherwise, the binary strings in a type set will not share the same prior probability. The instantaneous type-set SED code continues to apply if $q$ in \eqref{equiprob_b} is time-varying and/or has memory.}. 

Let $\mathcal S_1,\mathcal S_2,\dots$ be a sequence of sets, each containing some binary strings. We call these sets \emph{type sets}. We call a binary string $s_1$ the \emph{parent} of a binary string $s_2$ if $s_1 = s_2\boxminus$, and call a set $\mathcal S_k$ the \emph{parent} of a set $\mathcal S_j$ if all the parents of the strings in $\mathcal S_j$ are in $\mathcal S_k$. We denote by $p(j)$ the index of the parent of $\mathcal S_j$, e.g. $p(j) = k$. 

The following two insights are key to how the instantaneous type-set SED code attains complexity reduction:
\begin{itemize}
    \item[(a)]Binary strings in the alphabet are classified into type sets such that, at each time instant, the binary strings within a type set share the same length, same prior, and same posterior, and each type set has only one parent type set; thus type sets form a tree structure.  
    \item[(b)] A type-set SED rule is employed, which mimics the SED rule in \eqref{setparition}, except the groups $\{\mathcal G_x(t)\}_{x\in\{0,1\}}$ are formed by moving type sets rather than individual binary strings.
\end{itemize}
 
We proceed to show how the type sets in (a) are created at each time instant and why the type sets induce the complexity reduction. We denote by $\gamma_k(t)$ and $\rho_k(t)$ the prior and the posterior probabilities of the binary strings in $\mathcal S_k$ at time $t$. 
At time $t=1$, before the group partitioning, the encoder and the decoder create two type sets $\mathcal S_1\triangleq\{0\}$, $\mathcal S_2 \triangleq \{1\}$.
\begin{itemize}
    \item[1.] \emph{Creating new type sets.} At time $2\leq t \leq n$, before the group partitioning, each type set created at time $t-1$ generates a new type set by appending a $0$ and a $1$ to every binary string in that type set. For example, $\mathcal S_1$ and $\mathcal S_2$ generate $\mathcal S_3 = \{00,01\}$ and $\mathcal S_4 = \{10,11\}$.
    \item[2.] \emph{Updating priors.} As a consequence of classifying the strings into type sets, the prior $\gamma_k(t),k=1,2,\dots, t=2,3,\dots$ \eqref{post_prior} of the binary strings in $\mathcal S_k$ is fully determined by $\rho_k(t-1)$, $\rho_{p(k)}(t-1)$, and the bit arrival probability \eqref{equiprob}.
\end{itemize}
As we will see below, the type-set SED rule in (b) ensures that the binary strings within any type set created at time $t-1$ share the same posterior. Due to \eqref{equiprob}, the priors of all the strings in a type set can be updated simultaneously. Therefore, the binary strings in a type set at time $t$ share the same prior.

We proceed to introduce the type-set SED rule in (b) and to show why it leads to low complexity. 
\begin{itemize}
    \item[3.] \emph{Partitioning $\mathcal G_0(t)$ and $\mathcal G_1(t)$.} At time $t=1,2,\dots$, the encoder and the decoder sort all the existing type sets in descending order of $\gamma_k(t),k=1,2,\dots$. Starting from the top of the list, type sets are moved one by one to $\mathcal G_0(t)$ until the prior of $\mathcal G_0(t)$ exceeds $0.5$. The remaining type sets are moved to $\mathcal G_1(t)$. To drive the priors of the two groups closer to $(0.5,0.5)$, the encoder and the decoder may further partition the last type set moved to $\mathcal G_0(t)$ into two subsets with one left in $\mathcal G_0(t)$ and one moved to $\mathcal G_1(t)$. Due to the type-set creating process described in step 1, the binary strings in each type set can be sorted lexicographically into an ordered sequence. We split a type set by cutting the ordered sequence into two halves forming two subsets\footnote{To do the splitting, we do not need to sort the binary strings in type sets lexicographically at each time. What we do is we assign to every binary string in $\mathcal B_n$ a natural number in a lexicographic order. We can fully specify a type set by two numbers $a$ and $b$ corresponding to respectively the starting string $s_{\mathrm{start}}$ and the ending string $s_{\mathrm{end}}$ of that type set. Similarly, we can fully specify the new type set generated by this type set using $(2a+1,2b+2)$ to represent $(s_{\mathrm{start}}\boxplus 0,s_{\mathrm{end}}\boxplus1)$, and fully specify the two split subsets by $(a,a+n^*-1)$ and $(a+n^*,b)$.}. The number of strings in the subset moved to $\mathcal G_1(t)$ is $n^*$,
\begin{align}\label{nstar}
&n^* \triangleq \argmin_{n^*\in\{n_1,n_2\}}\left| 2 P_{B_{t}^*|Y^{t-1}}(\mathcal G_0(t)|y^{t-1}) - 1 -2n^*\gamma_k(t)\right|,\\
&n_1 \triangleq \left\lfloor\frac{P_{B_{t}^*|Y^{t-1}}(\mathcal G_0(t)|y^{t-1})-0.5}{\gamma_k(t)}\right\rfloor,\\
&n_2 \triangleq \left\lceil\frac{P_{B_{t}^*|Y^{t-1}}(\mathcal G_0(t)|y^{t-1})-0.5}{\gamma_k(t)}\right\rceil,
\end{align}
where $\gamma_k(t)$ is the prior of the binary string in the split type set.
If the type-set split results in the prior of $\mathcal G_0(t)$ becoming smaller than that of $\mathcal G_1(t)$, we simply swap $\mathcal G_0(t)$ and $\mathcal G_1(t)$ to satisfy \eqref{Gxinequal} with $a=0$, $b=1$. The type sets whose parent set was split may have two parents. To ensure that each type set has one valid parent, we recursively search for the type sets whose binary strings have parents from more than one type set and split them accordingly. This step guarantees that the posterior $\rho_{p(k)}(t)$, $k=1,2,\dots$, $t=1,2,\dots$ is deterministic.
\item[4.] \emph{Updating posteriors.} The posterior $\rho_k(t)$ $k=1,2,\dots$, $t=1,2,\dots$ \eqref{prior_post} only depends on $\gamma_k(t)$, the channel transition probability, and $Y_t= y_t$.
\item[5.] \emph{Stopping and decoding.} The stopping rule and the decoder of the instantaneous typeset
SED code operate as follows. The iterations stop at $t$ if
the type set whose binary string has the maximum posterior,
say $\rho_i(t)$, satisfies $\rho_i(t)\geq 1-\epsilon$, $|\mathcal S_i| = 1$, and the length of the only string in $\mathcal S_i$ is $n$. The decoder designates that string as the estimate $\hat B$.
\end{itemize}

A heuristic analysis in Appendix~\ref{complexity_algo} shows that the number of type sets at time $t$ is $O(t^2)$. The type-set SED rule sorts and splits type sets resulting in a quadratic complexity in the number of type sets, i.e., $O(t^4)$. The prior and the posterior updates together with the stopping and the decoding operations result in a linear complexity in the number of type sets, i.e., $O(t^2)$.
Therefore, the complexity of the instantaneous type-set SED code is $O(t^4)$. Fig.~\ref{rate_HB_fig} shows that the rate gap between the instantaneous SED code and the instantaneous type-set SED code is negligible. 

\section{Conclusion}
This is the first paper that implements instantaneous encoding of streaming data transmitted over a feedback channel with the bit arrival times unknown to the decoder. The new
algorithm -- the instantaneous SED code -- attains significantly better delay-reliability tradeoffs than existing schemes over $2$-input DMCs. This is evidenced both by the analysis of the reliability function (Theorem~\ref{prop1}) and by the simulation results (Fig.~\ref{rate_HB_fig}). While the instantaneous SED code has double-exponential compelxity, the new instantaneous type-set SED code has complexity of only $O(t^4)$ and no visible penalty in performance compared to the instantaneous SED code (Fig.~\ref{rate_HB_fig}). Analyzing the performance
of the instantaneous SED code over a general
DMC and providing a converse bound to the reliability function in \eqref{lb_reliability} are potential directions for future research. 

\section*{Acknowledgement}
Stimulating discussions with Prof. Richard Wesel, Amaael Antonini, and Prof. Aaron Wagner are gratefully acknowledged.

\appendix
\subsection{Error probability}\label{appen_error}
The error probability \eqref{error_constraint} of the instantaneous SED code is upper bounded by $\epsilon$ since
\begin{align}
    \{\hat B = \left[B_{\lambda_n}^*\right]_n\}~& = \{\hat B = \left[B_{\lambda_n}^*\right]_n, \left[B_{\lambda_n}^*\right]_n = B_{\tau_n}^*\}\\
    & \subseteq \{\hat B = B_{\tau_n}^*\}.\\
\hspace*{-0.25cm} \mathbb P\left[\hat B = \left[B_{\lambda_n}^*\right]_n\right]~
    & = \mathbb E\left[\max_{i\in\{0,1\}^n}\mathbb P\left[\left[B_{\lambda_n}^*\right]_n = i|Y^{\lambda_n}\right]\right]\\
    &\geq 1-\epsilon.
\end{align}
\subsection{Proof of Theorem~\ref{prop1}}\label{Prof_thm1}
For brevity, we define
\begin{align}
    &\rho_t(i,y^t) \triangleq P_{B_{t}^*|Y^t}\left(i|Y^t=y^t\right),\\
    &\theta_{t+1}(i,y^t) \triangleq P_{B_{t+1}^*|Y^t}\left(i|Y^t=y^t\right).
\end{align}

Together with the definition of $E(R)$ in \eqref{reliabilityfunc}, it suffices to show the following upper bound on the expected decoding time $\mathbb E[\lambda_n]$ to conclude \eqref{lb_reliability} in Theorem~\ref{prop1},
\begin{align} \nonumber
\mathbb E[\lambda_n] &\leq \frac{H(B_{\bar \tau_n +\mathsf d(n)}^*|Y^{\bar \tau_n +\mathsf d(n)})}{C}+\frac{\log_2 \frac{1}{\epsilon}}{C_1} \\\label{exp_tau}
&+ \bar\tau_n+ \mathsf d(n) + K,
\end{align}
where $K$ is a constant. For any $i\in\{0,1\}^n$, we define the random stopping time $\nu_i$ as
\begin{align} \label{nu}
  \nu_i\triangleq\min\left\{t\geq 0 \colon \rho_{t+\bar\tau_n+\mathsf d(n)}(i,Y^{t+\bar\tau_n+\mathsf d(n)})\geq 1-\epsilon\right\}.
\end{align}
According to the stopping rule in \eqref{tau}, for any $i\in\{0,1\}^n$,
\begin{align}
    \lambda_n \leq \nu_i + \bar \tau_n + \mathsf d(n),
\end{align}
which implies that the following inequality holds
\begin{align}
    &\mathbb E[\lambda_n]\\
    \leq ~& \bar \tau_n + \mathsf d(n)  + \mathbb E[\nu_i] \\ \label{v_tau}
    =~&  \bar \tau_n + \mathsf d(n) + 
   \mathbb E[\mathbb E[\nu_i|B_{\bar \tau_n+\mathsf d(n)}^*, Y^{\bar \tau_n+\mathsf d(n)}]].
\end{align}
To obtain the upper bound in \eqref{exp_tau}, it suffices to upper bound the conditional expectation in \eqref{v_tau}. Due to \eqref{EN} in assumption~(2), for $t\geq \tau_n + \mathsf d(n)$, we have for all $j\in\mathcal B_{n-1}$,
\begin{align}
    \theta_t(j,Y^{t-1}) = \rho_t(j,Y^t) = 0,~ a.s.
\end{align}
The instantaneous SED code in Section~\ref{SecIII} reduces to Naghshvar et al.'s scheme for $t\geq \tau_n + \mathsf d(n)$, that is, we consider $t=\tau_n + \mathsf d(n)$ as a new start in a classical block encoding setting, and we aim to transmit a message in $\{0,1\}^n$ over a feedback channel with the initial belief of each string $i\in \{0,1\}^n$ equal to $\rho_{\bar \tau_n + \mathsf d(n)}(i,Y^{\bar \tau_n + \mathsf d(n)})$. We use \cite[Eq. (14)]{Naghshvar} to upper bound the conditional expectation in \eqref{v_tau} as
\begin{align}\nonumber
    &\mathbb E[\nu_i(\bar \tau_n + \mathsf d(n))|B_{\bar \tau_n+\mathsf d(n)}^* = i, Y^{\bar \tau_n + \mathsf d(n)}=y]\\ \label{vi_ub}
    \leq~& \frac{\log\frac{1}{\epsilon}}{C_1} + \frac{\log\frac{1}{\rho_{\bar \tau_n + \mathsf d(n)}(i,y)}}{C} + K_i,
\end{align}
where $i\in\{0,1\}^t$, $y\in\mathcal Y^{\bar \tau_n+\mathsf d(n)}$, $K_i$ is a constant. Plugging \eqref{vi_ub} into \eqref{v_tau}, we obtain \eqref{exp_tau}.

\subsection{Number of type sets in the instantaneous type-set SED code}\label{complexity_algo}
We aim to show that the number of type sets at time $t$ is $O(t^2)$.

We define the following notations. We use $B$ or $A$ as the index of a random variable to signify that the random variable is obtained \emph{before} or \emph{after} the group partitioning process (step 3, Section~\ref{LRTSC_sec}). We denote by $N_B(t)$ and $N_A(t)$ the number of existing type sets at time $t$ before and after the encoder and the decoder partition $\mathcal G_0(t)$ and $\mathcal G_1(t)$, respectively.
We denote by $\mathcal S^*(t)$ the last type set moved to $\mathcal G_0(t)$ at time $t$ so that after $\mathcal S^*(t)$ is moved to $\mathcal G_0(t)$, the prior of $\mathcal G_0(t)$ exceeds $0.5$ for the first time (See step 3 in Section~\ref{LRTSC_sec}).
We denote by $\mathcal W_{B}(t)$ the set that contains all the new type sets created at time $t$ right before the encoder and the decoder partition $\mathcal G_0(t)$ and $\mathcal G_1(t)$ (See step 1 in Section~\ref{LRTSC_sec}, e.g., $\mathcal W_{B}(1) = \{\mathcal S_1,\mathcal S_2\}$, $\mathcal W_{B}(2) = \{\mathcal S_3,\mathcal S_4\}$). After the group partitioning process, some type set in $\mathcal W_{B}(t)$ may be split. Let $\mathcal W_{A}(t)$ be the set that consists of all subsets of the split type sets in $\mathcal W_{B}(t)$ and all unsplit type sets in $\mathcal W_{B}(t)$ after the group partitioning.

We show by heuristic analysis that the average number of type sets at time $t+1$ is upper bounded as
\begin{align}\label{Nth}
    &\mathbb E[N_B(t+1)] \leq \frac{2-q}{2}t^2 + \left(3-\frac{q}{2}\right)t + q, ~t\geq 1,\\ \label{Nth2}
    &\mathbb E[N_A(t+1)] \leq \mathbb E[N_B(t+1)] + (1-q)t +1,
\end{align}
where $q$ is the bit arrival probability in \eqref{equiprob}. 

We define a sequence of events $\mathcal E_t$, $t=1,2,3,\dots$ as
\begin{align} 
\mathcal E_1 \triangleq~& \{\mathcal S^*(1) \text{is not split}\}.\\ \nonumber
    \mathcal E_t \triangleq~ & \{\mathcal S^*(t) \text{ is split}\} \cap \{\text{the binary strings in }\\\label{E_event}
    &\mathcal S^*(t) \text{ are of length} \min(\lfloor qt\rfloor,n)\}, t>1.
\end{align}
Since at $t=1$, $|\mathcal S_1|=|\mathcal S_2| = 1$, $\mathbb P[\mathcal E_1] = 1$.
Extensive simulations on the evolution of type sets show that
\begin{align}
\mathbb P[\mathcal E^c_t] \ll 1, t=2,3,\dots
\end{align}
In the heuristic analysis that follows, we assume 
\begin{align}\label{heur_assump}
    \mathbb P[\mathcal E_t] = 1, t= 1,2,\dots
\end{align}

We will use rigorous analyses i)-iv) below together with the assumption in \eqref{heur_assump} to justify \eqref{Nth}--\eqref{Nth2}. The analyses ii)-v) below solely follow the construction of type-set SED codes. The type-set creating method in Section~\ref{LRTSC_sec} implies:
\begin{itemize}
    \item[i)] The binary strings in a type set are of the same length and can be ordered in a consecutive lexicographic order, e.g., $\mathcal S_3 = \{00,01\}$. To ensure that each type set has only one parent, once a type set is fixed to be split, among all its child type sets\footnote{We call $\mathcal S_j$ a child type set of $\mathcal S_i$ if $\mathcal S_i$ is the parent of $\mathcal S_j$. According to step 1 in Section~\ref{LRTSC_sec}, any type set $\mathcal S_i$ at most generates 1 type set $\mathcal S_j$. If $\mathcal S_j$ is never split, $\mathcal S_i$ has only one child type set $\mathcal S_j$. Yet, if $\mathcal S_j$ is split during the group partitioning process, $\mathcal S_i$ has multiple child type sets.}, at most one of them needs to be split accordingly. This is due to the reason that follows. Without loss of generality, we assume that we split an arbitrary type set $\mathcal S_k$ that contains $m$ binary strings $s_1,s_2,\dots,s_m$ sorted in a lexicographic order. 
    The type set splitting method (step 3, Section~\ref{LRTSC_sec}) cuts $\mathcal S_k$ between $s_{n^*}$ and $s_{n^*+1}$ to split it. Among all child type sets of $\mathcal S_k$, only the type set that contains both $s_{n^*}\boxplus 1$ and $s_{n^*+1}\boxplus 0$ will need to be split accordingly, and at most one type set $\mathcal S_m$ contains both two strings simultaneously. Recursively, due to the split of $\mathcal S_m$, the encoder and the decoder at most further split one child type set of $\mathcal S_m$. The recursion stops if the split type set has no child type sets, or if after the split of a type set $\mathcal S_k$, no child type sets of $\mathcal S_k$ contain $s_{n^*}\boxplus 1$ and $s_{n^*+1}\boxplus 0$ simultaneously. 
    
\item[ii)] For $t+1\leq n$, only the type sets in $\mathcal W_A(t)$ generate new type sets at time $t+1$ right before the group partitioning. These new type sets form $\mathcal W_{t+1,B}$. For $t+1>n$, $\mathcal W_{t+1,B}$ and $\mathcal W_{t+1,A}$ are empty since new type sets are no longer generated at each time $t+1$ before the group partitioning.
\item[iii)]By the definition of $\mathcal W_A(t)$, at time $t$, if a type set to be split is originally in $\mathcal W_{t,B}$, the two split subsets are in $\mathcal W_A(t)$. The binary strings of any type sets in $\mathcal W_B(t)$ are of length $\max(t,n)$.
\end{itemize}

Using \eqref{E_event} and i), we know:
\begin{itemize}
    \item[iv)] Given $\mathcal E_t$ occurs, after the split of $\mathcal S^*(t)$, the encoder and decoder at most further split $t-\lfloor qt\rfloor$ type sets. This is because (i) implies that given the split of $\mathcal S^*(t)$, the encoder and the decoder further split recursively at most $1$ type set of string length equal to $\lfloor qt \rfloor +1, \lfloor qt \rfloor +2, \dots, \max(t,n)$.
\end{itemize}
Using ii)--iv), we conclude that the encoder and the decoder at most split one type set in $\mathcal W_B(t)$, and the average cardinality of $\mathcal W_B(t), t=1,2,\dots$ evolves as
\begin{subequations}\label{cardw}
\begin{align}\label{card_wt}
    &\mathbb E[|\mathcal W_{B}(t+1)||\mathcal E_{t}] \leq \mathbb E[|\mathcal W_B(t)||\mathcal E_t] +1,~t\geq 2,\\ \label{card_wt2}
    &|\mathcal W_{B}(1)| = |\mathcal W_{B}(2)| = 2,
\end{align}
\end{subequations}
where \eqref{card_wt2} holds since no type set is split at $t=1$. Using ii) and iv), we conclude that given $\mathcal E_t$, the average number of type sets evolves as
\begin{subequations}
\begin{align}\nonumber
 \mathbb E[N_{B}(t+1)|\mathcal E_t]&\leq \mathbb E[N_B(t)|\mathcal E_t] + 1 + t-\lfloor qt\rfloor\\ \label{nt+1nt}
 &+ \mathbb E[|\mathcal W_{B}(t+1)||\mathcal E_t],\\
    N_B(1)&=2.
\end{align}
\end{subequations}
Replacing $\lfloor qt \rfloor$ by $qt$ in \eqref{nt+1nt}, plugging \eqref{cardw} into \eqref{nt+1nt}, and using \eqref{heur_assump}, we obtain \eqref{Nth}. Using \eqref{heur_assump}, iv), and \eqref{Nth}, we obtain \eqref{Nth2}.
\begin{figure}
    \centering
    \includegraphics[scale=0.5]{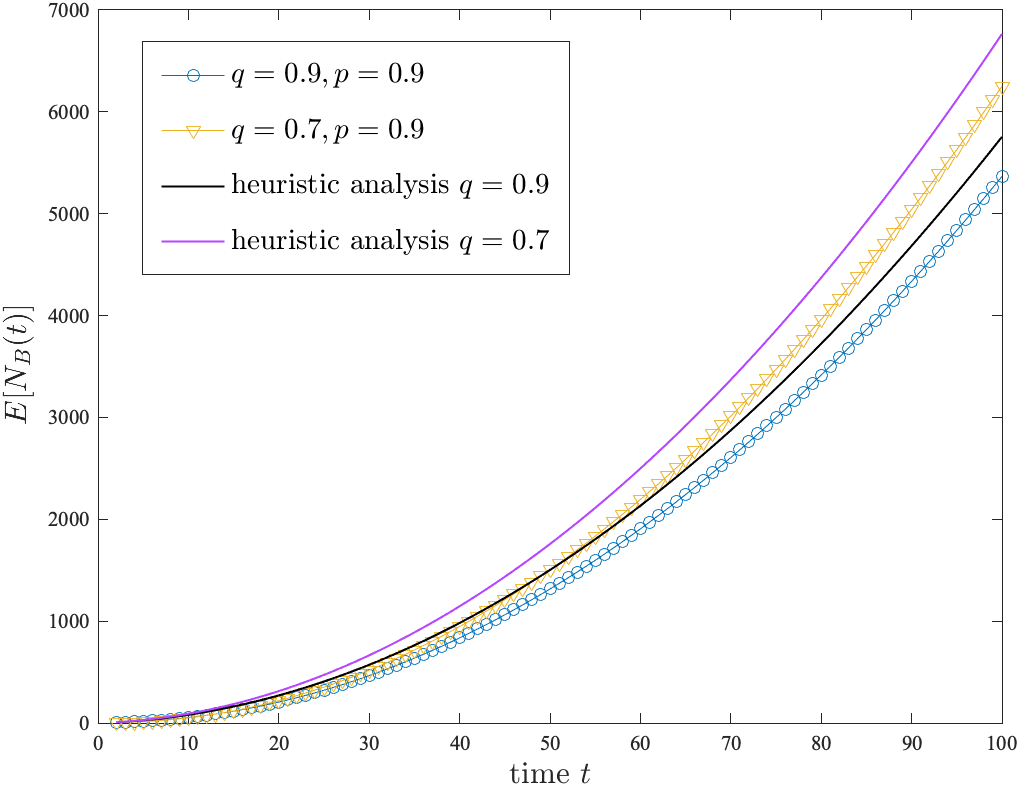}
    \caption{Average number of type sets $\mathbb E[N_B(t)]$ versus time $t$ over a BSC($p$). The bit arrival probability is $q$ in \eqref{equiprob}. The message length $n=100$. The curves by heuristic analysis are plotted as \eqref{Nth}. We only present curves for $p=0.9$, since according to our heuristic analysis, the upper bounds to the average number of type sets \eqref{Nth}--\eqref{Nth2} are not functions of $p$.}
    \label{groupt}
\end{figure}

Simulation results confirm our heuristic analysis. The fitting curves \eqref{Nth} in Fig.~\ref{groupt} increase at a similar speed as the simulated curves, indicating that the heuristic expressions in \eqref{Nth}--\eqref{Nth2} are meaningful gauges of the average number of type sets. The fitting curves in Fig.~\ref{groupt} are slightly larger than the simulated curves since \eqref{Nth}--\eqref{Nth2} are upper bounds to $\mathbb E[N_{B}(t+1)]$ and $\mathbb E[N_A(t+1)]$.

\begin{thebibliography}{}
\bibitem{Shannon} C. E. Shannon, ``The zero error capacity of a noisy channel,'' in \textit{IRE
Transactions on Information Theory}, vol. 2, no. 3, pp. 8--19, Sep. 1956
\bibitem{Horstein} M. Horstein, ``Sequential transmission using noiseless feedback,'' in \textit{IEEE
Transactions on Information Theory}, vol. 9, no. 3, pp. 136--143, July 1963.
\bibitem{PPV} Y. Polyanskiy, H. V. Poor, and S. Verd\'{u}, ``Feedback in the non-asymptotic regime,'' in \textit{IEEE Transactions on Information Theory,} vol. 57, no. 8, pp. 4903--4925, Aug. 2011.
\bibitem{Sch1} J. P. M. Schalkwijk, ``A class of simple and optimal strategies for block
coding on the binary symmetric channel with noiseless feedback,'' in
\textit{IEEE Transactions on Information Theory}, vol. IT-17, no. 3, pp. 283--287, May 1971.
\bibitem{Sch2} J. P. M. Schalkwijk and K. A. Post, ``On the error probability for a class
of binary recursive feedback strategies,'' in \textit{IEEE Transactions on Information Theory}, vol.
IT-19, pp. 498--511, July 1973.
\bibitem{Waeber} R. Waeber, P. I. Frazier and S. G. Henderson, ``Bisection search with noisy responses,'' in \textit{SIAM Journal on Control and Optimization}, vol. 51, No. 3, pp. 2261--2279, 2013.
\bibitem{SK-I} J. P. M. Schalkwijk and T. Kailath, ``A coding scheme for additive noise channels with feedback part I: No bandwidth constraint,'' in \textit{IEEE Transactions on Information Theory}, vol. IT-12, pp. 172--182, Apr. 1966.

\bibitem{Burnashev} M. V. Burnashev, ``Data transmission over a discrete channel
with feedback. random transmission time,'' in \textit{Problemy Peredachi
Informatsii}, vol. 12, no. 4, pp. 10--30, 1975.
\bibitem{Yamamoto} H. Yamamoto and K. Itoh, ``Asymptotic Performance of a Modified
Schalkwijk-Barron Scheme for Channels with Noiseless Feedback,'' in
\textit{IEEE Transactions on Information Theory}, vol. 25, pp. 729--733, 1979.
\bibitem{ooi}  J. M. Ooi and G. W. Wornell, ``Fast iterative coding techniques for feedback channels,'' in \textit{IEEE Transactions on Information Theory}, vol. 44, no. 7, pp. 2960--2976, Nov. 1998.
\bibitem{Naghshvar} M. Naghshvar, M. Wigger and T. Javidi, ``Optimal reliability over a class of binary-input channels with feedback,'' in \textit{2012 IEEE Information Theory Workshop}, Lausanne, pp. 391--395, Sep. 2012.
\bibitem{Naghshvar2} M. Naghshvar, T. Javidi, and M. Wigger, ``Extrinsic Jensen–Shannon divergence: Applications to variable-length coding,'' in \textit{IEEE Transactions on Information Theory}, vol. 61, no. 4, pp. 2148--2164, Apr. 2015.
\bibitem{Antonini} A. Antonini, H. Yang and R. D. Wesel, ``Low Complexity Algorithms for Transmission of Short Blocks over the BSC with Full Feedback,'' \textit{2020 IEEE International Symposium on Information Theory}, Los Angeles, CA, USA,  pp. 2173--2178, July 2020.
\bibitem{Shayevitz} O. Shayevitz and M. Feder, ``Optimal feedback communication via
posterior matching,'' in \textit{IEEE Transactions on Information Theory}, vol. 57, no. 3, pp. 118--1222, Mar. 2011.
\bibitem{Lalitha} A. Lalitha, A. Khina, T. Javidi and V. Kostina, ``Real-time Binary Posterior Matching,'' in \textit{2019 IEEE International Symposium on Information Theory}, Paris, France, pp. 2239--2243, July 2019.
\bibitem{Antonini2} A. Antonini, R. Gimelshein, and R. D. Wesel, ``Causal (progressive) encoding over binary
symmetric channels with noiseless feedback,'' in \textit{2021 IEEE International Symposium on Information Theory}, Melbourne, Victoria, Australia, July 2021.
\bibitem{Nian3} N. Guo and V. Kostina, ``Instantaneous small-enough difference coding over a DMC,'' \emph{ArXiv} preprint, Mar. 2021.


\end{thebibliography}
\end{document}